
\documentclass[11pt]{article}
\usepackage[latin1]{inputenc}
\usepackage{a4wide,epsfig,amsmath,amssymb,a4wide}

\usepackage{color} 
\DeclareGraphicsRule{.pdftex}{pdf}{.pdftex}{}



\newcommand{\etal}{{\it et~al.\ }}

\newtheorem{theorem}{Theorem}

\newenvironment{proof}{{\noindent\bf Proof:\/}}{\hfill$\Box$\vskip 0.1in}

\newcommand{\algA}{{\cal A}}

\newcommand{\tmp}{\tau}
\newcommand{\tmpthreshold}{T}

\newcommand{\NP}{\mbox{\sf NP}}
\newcommand{\ThreePartition}{{\mbox{\sc 3-Partition}}}
\newcommand{\NumThreeDM}{{\mbox{\sc Numerical-3D-Matching}}}

\newcommand{\EarliestDeadlineFirst}{\mbox{\sf EarliestDeadlineFirst}}
\newcommand{\CoolestFirst}{\mbox{\sf CoolestFirst}}


\date{}

\begin{document}

\title{Algorithms for Temperature-Aware Task \\
	Scheduling in Microprocessor Systems}

\author{Marek Chrobak%
\thanks{Department of Computer Science,
        University of California, Riverside, CA 92521, USA.
        Supported by NSF grants OISE-0340752 and CCF-0729071.}
\and
Christoph D\"urr%
\thanks{CNRS, LIX UMR 7161, Ecole Polytechnique 
        91128 Palaiseau, France.
         Supported by ANR Alpage.}
\and
Mathilde Hurand%
\footnotemark[2]
\and
Julien Robert%
\thanks{Laboratoire de l'Informatique du Parall\'elisme,
Ecole Normale Sup\'erieure de Lyon; ENS Lyon, France.}
}


\maketitle

\begin{abstract}
We study scheduling problems motivated
by recently developed techniques for microprocessor
thermal management at the operating systems level. 
The general scenario can be described as follows.
The microprocessor's temperature is controlled by
the hardware thermal management system that continuously
monitors the chip temperature and automatically reduces the
processor's speed as soon as the thermal threshold
is exceeded. Some tasks are more CPU-intensive than other
and thus generate more heat during execution. The
cooling system operates non-stop, reducing (at an
exponential rate) the deviation of the
processor's temperature from the ambient temperature.
As a result, the processor's temperature, and thus
the performance as well, depends on the
order of the task execution. Given a variety of possible
underlying architectures, models for 
cooling and for hardware thermal management, as well
as types of tasks, this scenario gives
rise to a plethora of interesting and never studied
scheduling problems. 

We focus on scheduling real-time jobs in a simplified
model for cooling and thermal management.
A collection of unit-length jobs is given, each
job specified by its release time, deadline and heat contribution. 
If, at some time step, 
the temperature of the system is $\tmp$ and the processor executes
a job with heat contribution $h$, then the temperature
at the next step is $(\tmp+h)/2$. The temperature cannot
exceed the given thermal threshold $\tmpthreshold$.
The objective is to maximize the throughput, that is, 
the number of tasks that meet their deadlines. We prove that, in the 
offline case, computing the optimum schedule is {\NP}-hard, even if
all jobs are released at the same time. In the online case,
we show a $2$-competitive deterministic algorithm and a matching lower bound. 
\end{abstract} 


\section{Introduction}


\paragraph{Background.}
The problem of managing the temperature of processor systems is
not new; in fact, the system builders had to deal with this challenge
since the inception of computers. Since early 1990s, the
introduction of battery-operated laptop computers and sensor systems
highlighted the related issue of controlling the energy consumption.

Most of the initial work on these problems was hardware and
systems oriented, and only during the last decade substantial
progress has been achieved on developing models and
algorithmic techniques for microprocessor temperature and energy management.
This work proceeded in several directions. One direction is based
on the fact that the energy consumption is a fast growing
function of the processor speed (or frequency).
Thus we can save energy by simply slowing down the processor. 
Here, algorithmic research focussed on \emph{speed scaling} --
namely dynamically adjusting the processor speed over time to optimize
the energy consumption while ensuring that the system meets
the desired performance requirements.
Another technique (applicable to the whole system, not
just the microprocessor) involves \emph{power-down strategies}, 
where the system is powered-down or even completely
turned off when some of its components are idle. 
Since changing the power level of a 
system introduces some overhead, scheduling the work to minimize
the overall energy usage in this model becomes a challenging
optimization problem. 

Models have also been
developed for the processor's thermal behavior. Here, the
main objective is to ensure that the system's temperature
does not exceed the so-called \emph{thermal threshold}, above
which the processor cannot operate correctly, or may even
be damaged. In this direction, techniques and algorithms have been
proposed for using speed-scaling to optimize the system's 
performance while maintaining the temperature below the
threshold.

We refer the reader to the survey by Irani and Pruhs \cite{IraPru05},
and references therein,
for more in-depth information on the models and algorithms
for thermal and energy management.


\paragraph{Temperature-aware scheduling.}
The above models address energy and thermal 
management at the micro-architecture level. In contrast, 
the problem we study in this paper addresses the issue
of thermal management at the operating systems level.
Most of the previous work in this direction focussed
on multi-core systems, where one can
move tasks between the processors to minimize the maximum temperature
\cite{MerBel06,BeWeWK03,CCFHWB07,DonMar06,KuSPJ06,KuChBB06,GoPoVi04,MoChRS05}.
However, as it has been recently discovered, even in 
single-core systems one can exploit variations in heat contributions
among different tasks to reduce the processor's temperature
through appropriate task scheduling
\cite{BeWeWK03,GoPoVi04,DonMar06,KuSPJ06,YaXiCZ08}.
In this scenario, the microprocessor's temperature is controlled by
the hardware dynamic thermal management (DTM)
system that continuously
monitors the chip temperature and automatically reduces the
processor's speed as soon as the thermal threshold
(maximum safe operating temperature) is exceeded. 
Typically, the frequency is reduced by half, although it can be
further reduced to one fourth or even one eighth, if needed. 
Running at a lower frequency, the CPU generates less heat.
The cooling system operates non-stop, reducing (at an
exponential rate) the deviation of the
processor's temperature from the ambient temperature.
Once the chip cools down to below the
threshold, the frequency is increased again. 

Different tasks use different microprocessor units in 
different ways; in particular, some tasks are more CPU-intensive than other. 
As a result, the processor's thermal behavior -- and thus
the performance as well -- depends on the order of the task execution. 
In particular, Yang {\etal}~\cite{YaXiCZ08} point out that, based on
the standard model for the microprocessor thermal behavior,
for any given two tasks, scheduling the ``hotter'' job before 
the ``cooler'' one, results in a lower 
final temperature than after the reverse order.
They take advantage of this phenomenon to reduce the number
of DTM invocations, thus improving the performance of the OS scheduler.

With multitudes of possible
underlying architectures (for example, single- vs. multi-core systems), 
models for cooling and hardware thermal management, as well
as types of jobs (real-time, batch, etc.), the scenario
outlined above gives
rise to a plethora of interesting and never yet
studied scheduling problems. 


\paragraph{Our model.}
We focus on scheduling real-time jobs in a somewhat
simplified model for cooling and thermal management. 
The time is divided into unit time slots and each job
has unit length. (These jobs represent unit slices of
the processes present in the OS scheduler's queue.)
We assume that the heat contributions of these jobs
are known. This is counterintuitive, but reasonably
realistic, for, as discussed in \cite{YaXiCZ08},
these values can be well approximated using
appropriate prediction methods.

In our thermal model we assume, without loss of generality,
that the ambient temperature is $0$ and that the
heat contributions are expressed in the units of
temperature (that is, by how much they would increase the
chip temperature in the absence of cooling). 
In reality \cite{YaXiCZ08}, during the execution of a
job, its heat contribution is spread over the whole time
slot and so is the effect of cooling; thus, the final
temperature can be expressed using an integral function.
In this paper, we use a simplified model where
we first take into account the job's heat contribution,
and then apply the cooling, where the cooling simply
reduces the temperature by half.

Finally, we assume that only one processor frequency
is available. Consequently, if there is no job whose
execution does not cause a thermal violation, the
processor must stay idle through the next time slot.


\paragraph{Our results.}
Summarizing, our scheduling problem can be now formalized as follows.
A collection of unit-length jobs is given, each
job $j$ with a release time $r_j$, deadline $d_j$ and heat 
contribution $h_j$. If, at some time step, the
temperature of the system is $\tmp$ and the processor executes
a job $j$, then the temperature at the next step is $(\tmp+h_j)/2$.
The temperature cannot exceed the given thermal threshold $\tmpthreshold$.
The objective is to compute a schedule which
maximizes the number of tasks that meet their deadlines. 

We prove that in the offline case computing
the optimum schedule is {\NP}-hard, even if all jobs are released
at the same time and have equal deadlines. 
In the online case, we show a $2$-competitive deterministic
algorithm and a matching lower bound.


\section{Terminology and Notation}
\label{sec: Terminology and Notation}

The input consists of $n$ unit-length jobs that we number $1,2,\ldots,n$. 
Each job $j$ is specified by a triple $(r_j,d_j,h_j)\in \mathbb N \times \mathbb N \times \mathbb Q$, where
$r_j$ is its release time, $d_j$ is the deadline and $h_j$ is its heat contribution.
The time is divided into unit-length slots and each job can be executed in any time
slot in the interval $[r_j,d_j]$. By $\tmp_u$ we denote the processor
temperature at time $u$. The initial temperature is $\tmp_0 = 0$, and
it changes according to the following rules: if the
temperature of the system at a time $u$ is $\tmp_u$ and the processor executes
a job $j$ then the temperature at time $u+1$ is 
$\tmp_{u+1} = (\tmp_u+h_j)/2$. The temperature
cannot exceed the given thermal threshold $\tmpthreshold$.
Without loss of generality, we assume that $\tmpthreshold=1$.
Thus if $(\tmp_u+h_j)/2 > 1$ then $j$ cannot be executed at time $u$.
Idle slots are treated as executing a job with heat contribution $0$,
that is, after an idle slot the temperature decreases by half.

Given an instance, as above, the objective is to compute a schedule
with maximum \emph{throughput}, where throughput is defined as the
number of completed jobs. Extending the standard notation for
scheduling problems, we denote the offline version of this problem by 
$1|r_{i},h_{i},p_i=1|\sum U_{i}$. 

In the online version, denoted $1|\textrm{online-}r_{i},h_{i},p_i=1|\sum U_{i}$,
jobs are available to the algorithm at their release time. Scheduling
decisions of the algorithm cannot depend on the jobs that have not
yet been released.


\paragraph{Example.}
Suppose we have four jobs, specified in notation $j \to (r_j,d_j,h_j)$:
$1\to (0,2,0.4)$, $2\to (0,4,0.6)$, $3\to (2,3,1.9)$, $4\to (4,6,0.8)$.

\begin{figure}[ht]
\begin{center}
\includegraphics[width=2.5in]{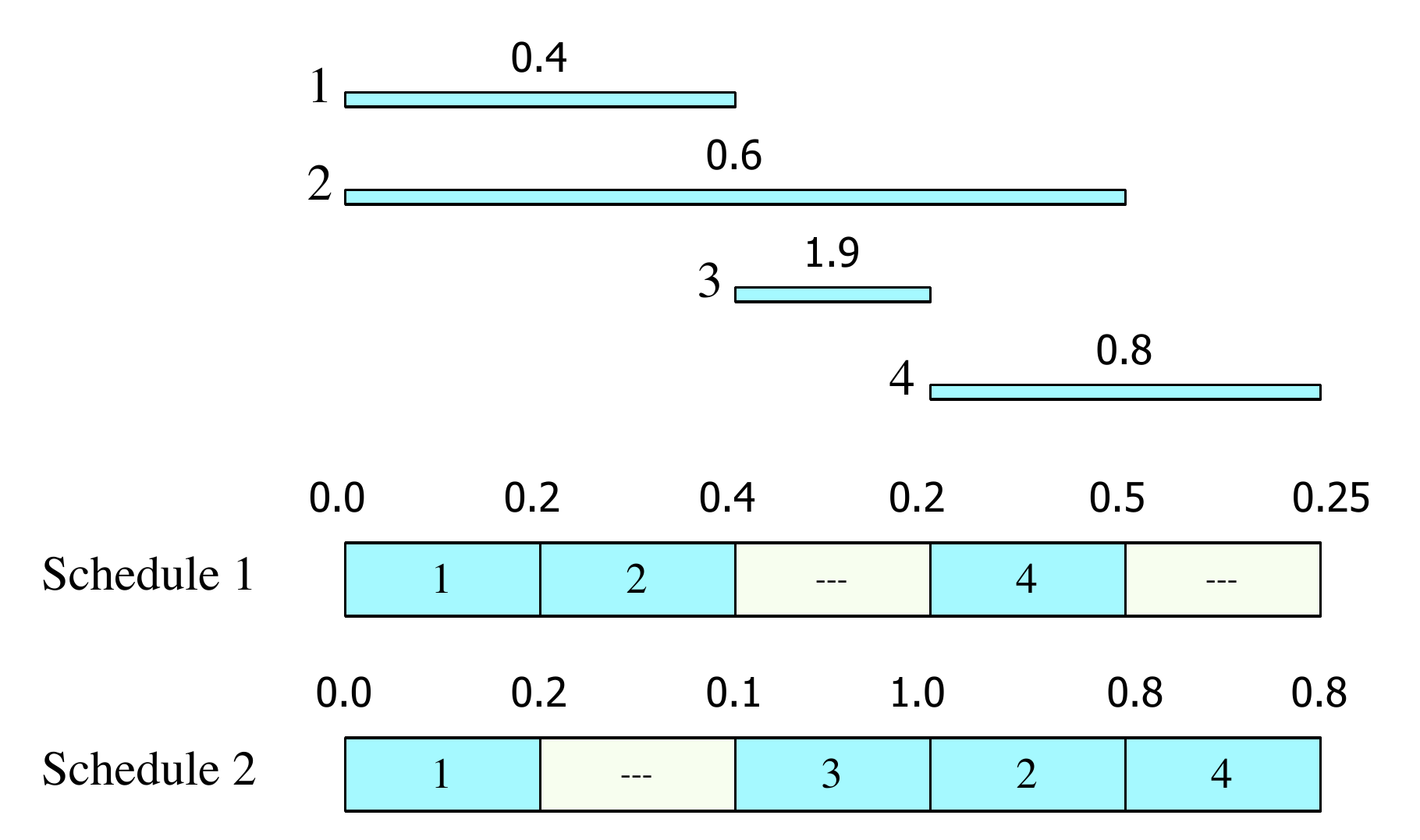}
\caption{Example of two schedules.}
\label{fig: example 1}
\end{center}
\end{figure}

Figure~\ref{fig: example 1} shows these jobs and two schedules.
Numbers above the schedules denote temperatures. In the first
schedule, when we schedule job $2$ at time $1$, the
processor is too hot to execute job $3$, so it will not
complete job $3$ at all. In the second schedule, we stay
idle in step 2, allowing us to complete all jobs.


\section{The NP-Completeness Proofs}
\label{sec: The NP-Completeness Proof}

In this section we prove that the scheduling problem 
$1|r_{i},h_{i},p_i=1|\sum U_{i}$
is {\NP}-hard. For the sake of exposition, we start with a proof for the
general case, and later we give a proof for the special case when all
release times and deadlines are equal.

\begin{theorem}
The offline problem $1|r_{i},h_{i},p_i=1|\sum U_{i}$ is {\NP}-hard.
\end{theorem}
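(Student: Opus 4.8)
The plan is to reduce from a strongly {\NP}-complete number problem so that the numerical parameters can be encoded as heat contributions in unary-bounded precision. The natural candidate is {\ThreePartition} (or {\NumThreeDM}): we are given $3m$ positive integers summing to $mB$, and we must decide whether they can be partitioned into $m$ triples each summing to $B$. The idea is to build a schedule of length roughly $m$ (one "block" per target triple) where completing all jobs forces exactly three of the partition-jobs into each block, and the thermal constraint is what enforces the sum-to-$B$ condition.

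\textbf{Construction sketch.} First I would recall the key thermal fact: after executing jobs with heats $h_1,\dots,h_k$ in that order starting from temperature $\tmp$, the temperature is $\tmp/2^k + \sum_{i} h_i/2^{k-i+1}$, so the most recently executed (hot) jobs dominate. Exploiting the observation from \cite{YaXiCZ08} that scheduling hotter jobs earlier yields lower final temperature, I would scale each partition integer $a_i$ to a small heat value $h_i = \varepsilon a_i$ for a suitably tiny $\varepsilon$, and intersperse "separator" jobs with large heat that must be executed at fixed times (forced by tight release time / deadline windows, $r_j = d_j$) to delimit the $m$ blocks. Inside each block the temperature, as a function of which three small jobs land there, is an affine function of the sum of their $a_i$'s; I would choose $\tmpthreshold$, $\varepsilon$, and the separator heats so that a block of three small jobs is thermally feasible (and leaves the processor cool enough for the next forced separator) if and only if the three integers sum to \emph{at most} $B$. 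A counting argument — there are $3m$ small jobs, $m$ blocks, each holding at most three, and an independent upper-bound argument forcing at least three per block from the side of the total — then pins every block to exactly three jobs summing to exactly $B$. Conversely, a valid {\ThreePartition} solution schedules all jobs. This gives: all jobs meet their deadlines $\iff$ the {\ThreePartition} instance is a yes-instance.

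\textbf{Main obstacle.} The delicate part is the arithmetic of the thermal recurrence: because cooling halves the temperature every step, the contribution of a job decays geometrically, so "the sum of the three small heats in a block" is not cleanly a single linear quantity — it is a weighted sum $\sum h_i / 2^{\text{position}}$. I expect the real work is to lay out the block structure (how many slack/idle slots, where the forced separators sit, possibly using a "reset" job of heat $0$ or a forced very-hot job right before each block to wipe out history) so that feasibility of a block depends on the three small jobs only through $a_{i_1}+a_{i_2}+a_{i_3}$ up to an error smaller than $\varepsilon$, and then to verify the threshold inequality holds with the right direction in both the yes and no cases. Making $\varepsilon$ small enough is free since {\ThreePartition} is strongly {\NP}-hard (the $a_i$ are polynomially bounded, so $\varepsilon = 1/\text{poly}$ suffices and keeps the heats as polynomial-size rationals). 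I would also need to double check that the forced jobs themselves are always feasible, i.e. that the incoming temperature before each separator never exceeds what the separator can tolerate — this is where choosing the separator heat and $\tmpthreshold$ jointly matters. Once the gadget is calibrated, correctness in both directions and the polynomial-size check are routine, and membership in {\NP} is immediate since a schedule is a polynomial-size certificate verifiable by simulating the temperature recurrence.
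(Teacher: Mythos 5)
There is a genuine gap, and it sits exactly where you placed the ``main obstacle'': with a uniform scaling $h_i=\varepsilon a_i$ and blocks of three interchangeable slots, the temperature at the end of a block starting at temperature $\tmp$ is $\tmp/8+h_{(1)}/8+h_{(2)}/4+h_{(3)}/2$, where $h_{(k)}$ is the heat of the job placed in the $k$-th slot. The dependence on the \emph{assignment of jobs to positions} is of order $\varepsilon\cdot\max_i a_i$, i.e.\ far larger than the unit-of-$a$ signal ($\approx\varepsilon$) you need the threshold to detect; a scheduler can simply put the hottest of the three jobs first and push a triple with sum $B+k$ below the temperature of a badly ordered triple with sum $B$. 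So block feasibility cannot be made a function of $a_{i_1}+a_{i_2}+a_{i_3}$ alone ``up to an error smaller than $\varepsilon$'' --- that calibration is not deferred routine work, it is impossible for {\ThreePartition} with this encoding, because the three jobs in a block carry no class information that could force an order or compensate the weights $1/8,1/4,1/2$. A second, independent gap is that even if each block were linear in its sum, ``feasible iff sum $\le B$'' treats blocks as independent, whereas thermal slack carries over: a cool block (sum $<B$) loosens the constraint on the next block, so blockwise inequalities do not immediately force every block to hit $B$ exactly; one needs a cumulative argument over prefixes of blocks.

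For comparison, the paper resolves these difficulties in two different ways. Its proof of this theorem does not encode $a_i$ in the temperature at all: it sets $h_i=2-2^{1-a_i}$ and shows (via the invariant that the temperature stays above $1/2$ after every execution) that such a job \emph{must} be preceded by exactly $a_i-1$ idle slots, so the numeric constraint becomes pure time consumption inside intervals of length $\beta$ delimited by tight separator jobs --- the geometric decay is absorbed into the exponential encoding rather than fought with a small $\varepsilon$. Its second proof (for identical release times and deadlines) is the one closest to your sketch: it uses {\NumThreeDM}, where the three numbers of a triple come from distinguishable classes, assigns heats $8f(a)$, $4f(b)$, $2f(c)$ with $f$ affine so that the forced order $A,B,C$ inside a block exactly cancels the positional weights, and then handles the cross-block slack with a weighted prefix-sum argument showing all per-block deviations must vanish. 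To salvage your plan you would essentially have to reproduce one of these two mechanisms; as written, the reduction's central equivalence fails in the direction ``schedule $\Rightarrow$ partition.''
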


\begin{proof}
We use a reduction from the {\ThreePartition} problem, defined as follows:
we are given a set $S$ of $3n$ positive integers $a_{1},\ldots, a_{3n}$ 
such that $\beta/4<a_{i}<\beta/2$ for all $i$, where $\beta=\frac1n \sum a_{i}$. The goal is to 
partition $S$ into $n$ subsets, each subset with the total sum equal exactly $\beta$.
(By the assumption on the $a_i$, each subset will have to have exactly $3$ elements.)
This partition of $S$ will be called a \emph{$3$-partition}.
{\ThreePartition} is well-known to be {\NP}-hard \cite{GarJoh79} in the strong sense,
that is, even if $\max_i a_i \le p(n)$, for some polynomial $p(n)$.

We now describe the reduction. For the given instance of {\ThreePartition}
we construct an instance of $1|r_{i},h_{i},p_i=1|\sum U_{i}$ with
$4n$ jobs. These jobs will be of two types:
\begin{description}
	\item{(i)}
	First, we have $3n$ jobs that correspond to the instance of {\ThreePartition}.
	For every $1\le i \le 3n$ we create a job of heat contribution $2-2^{1-a_{i}}$, 
	release time $1$ and deadline $n(\beta+1)$.
	\item{(ii)}
	Next, we create $n$ additional ``gadget" jobs. These jobs are tight,
	meaning that their deadline is just one unit after the release time. 
	The first of these jobs has heat contribution $2$ and release time $0$. 
	Then, for each $1\le j \le n-1$, we have a job with heat contribution $1$ and 
	release time $j(\beta+1)$.
\end{description}
We claim that $S$ has a 3-partition if and only if the
instance of $1|r_{i},h_{i},p_i=1|\sum U_{i}$ constructed above
has a schedule with throughput $4n$ (that is, with all jobs completed).

The main idea is this: Imagine that at some moment the 
temperature is $1$, and we want to schedule a job of heat contribution $2-2^{1-x}$, for some integer $x\ge 1$. 
Then we must first wait $x-1$ time units, so that the processor cools down to
$(\frac12)^{x-1}=2^{1-x}$, before we can schedule that job, and then right at the 
completion of this job the temperature is $1$ again. 
The analogous property holds, in fact, even if 
at the beginning the temperature was some $\tmp > 1/2$,
except that then, after completing this job, the new
temperature will be 
$\tmp' = (\tmp 2^{1-x} + 2 - 2^{1-x})/2 
		> (2^{1-x}/2 + 2 - 2^{1-x})/2 
		= 1 - 2^{-x-1} 
		\ge 1/2$,
that is, it may be different than $\tmp$ but still strictly greater than $1/2$.
With this observation in mind, the proof of the above claim is quite easy.

$(\Leftarrow)$
First we show that if there is a solution to the scheduling problem, then $S$ has
a 3-partition. Note that the tight jobs divide the time 
into $n$ intervals of length $\beta$ each. Also each of the $3n$ other jobs is scheduled 
in exactly one of these intervals. This defines a partition of $S$ into
$n$ sets. 

Now we claim that after every job execution the temperature is strictly more than $1/2$. 
This is true for the first job to be scheduled, since it has heat contribution 
$2$. Each other job in the instance, including the tight jobs, has heat 
contribution at least $1$. Therefore right after its execution the temperature is at 
least $1/2$, already if we take only this job into account.  But there is also a declining but 
non-zero temperature contribution from the first tight job. So overall the temperature 
after every execution is strictly more than $1/2$.  
 
Together with the earlier observation, this implies that every non-tight job of heat 
contribution $2^{1-{a_i}}$ must be preceded by $a_i-1$ idle units, thus using 
$a_i$ time slots in total. Therefore every set in the 
above mentioned partition has the total sum at most $\beta$. Since 
there are at most $n$ sets in the partition, the total sum of each must be 
exactly $\beta$. This proves that $S$ has a $3$-partition.

$(\Rightarrow)$
Now we show the other implication, namely that if $S$ has a
$3$-partition then there is a solution to the scheduling instance. Simply schedule the 
tight jobs at their release time. This divides the time into $n$ intervals of length $\beta$ 
each. 
Assign each of the $n$ sets in the partition to a distinct interval and schedule its jobs 
in this interval: every integer $a_i$ in the set corresponds to a job of heat contribution 
$2-2^{1-{a_i}}$, and we schedule it preceded with $a_i-1$ idle time units.  The jobs of the set 
can be scheduled in an arbitrary order, the important property being that, since their 
total sum is $\beta$, they all fit exactly in this interval. After all jobs in one
set are executed the temperature is exactly $1$, and during the execution the
temperature does not exceed $1$ (because we pad the schedule with enough idle slots).
All release time and deadline constraints are 
satisfied, so the scheduling instance has a feasible solution as well.

It remains to show that the above
instance of $1|r_{i},h_{i},p_i=1|\sum U_{i}$ can be produced in polynomial 
time from the instance of {\ThreePartition}. Indeed, every number $a_{i}$ is mapped 
to some number $2-2^{1-a_{i}}$, which is described with $O(a_{i})$ bits.  Since we 
assumed that all numbers $a_{i}$ for $1\le i\le n$ are polynomial in $n$, 
the reduction will take polynomial time, and the proof is complete.
\end{proof}


The above construction used the constraints of the release times and deadlines to
fix tight jobs that force a partition of the time into intervals. We can actually
prove a stronger result, namely that the problem remains {\NP}-complete even if all 
release times are $0$ and all deadlines are equal.
Why is it interesting? One common approach in designing on-line algorithms for
scheduling is to compute the optimal schedule for the pending jobs, and use this schedule
to make the decision as to what execute next. The set of pending jobs forms an instance
where all jobs have the same release time. 
Our {\NP}-hardness result does not necessarily imply that the
above method cannot work (assuming that we do not care about the running time of the online
algorithm), but it makes this approach much less appealing, since reasoning about the
properties of the pending jobs is likely to be very difficult.


\begin{theorem}\label{thm: np-complete 2}
The offline problem $1|r_{i},h_{i},p_i=1|\sum U_{i}$ is strongly {\NP}-hard even for
the special case when jobs are released at time $0$ and
all deadlines are equal.
\end{theorem}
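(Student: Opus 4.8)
The natural approach is a reduction from \ThreePartition\ (or a similar strongly {\NP}-hard number problem such as \NumThreeDM), which, because the heat contributions the reduction produces are describable with $O(\max_i a_i)$ bits and the $a_i$ may be taken polynomially bounded, will give strong {\NP}-hardness. The engine of the previous proof is still available: a job of heat contribution $2-2^{1-x}$ can be run only once the temperature has dropped to $2^{1-x}$, so, starting from a temperature $\tmp\in(\half,1]$, it forces exactly $x-1$ idle slots in front of it and, on completion, leaves the temperature again in $(\half,1]$ --- more precisely, it multiplies the ``defect'' $1-\tmp$ by $2^{-x}$. So I would again create $3n$ ``value jobs,'' the $i$-th of heat contribution $2-2^{1-a_i}$, but now all released at time $0$ with one common deadline $D$.

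What the previous construction got for free --- tight jobs, pinned by release times and deadlines, carving the timeline into $n$ blocks of length $\beta$ --- must now be simulated using heat contributions alone. The plan is to add $n$ ``separator'' jobs whose heat contributions are pushed very close to $\tmpthreshold$ (of the form $2-\varepsilon_k$ for minuscule $\varepsilon_k$), so that a separator can be executed only after the processor has been driven almost all the way down to the ambient temperature; executing a separator therefore forces a long, essentially fixed-length run of idle slots before it and resets the temperature to $\approx 1$ after it. Choosing $D$ to be exactly the length of the intended schedule --- one separator, then $n$ repetitions of [a group of value jobs realizing one triple, a forced cool-down, the next separator] --- one would claim that a schedule of throughput $4n$ exists if and only if the \ThreePartition\ instance is a yes-instance. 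The ``if'' direction is routine: from a $3$-partition, write down this canonical schedule, padded with idle slots, and verify the temperature never exceeds $\tmpthreshold$ and that every (common) deadline is met.

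The ``only if'' direction is the crux, and the main obstacle. The difficulty is structural rather than computational: a common deadline is a single global time budget, and the value jobs contribute a total of $n\beta$ slots to \emph{any} schedule regardless of how they are grouped, so a plain counting argument does not by itself forbid one block from overflowing at the expense of another underflowing. The construction must therefore be arranged so that a mis-grouped schedule is not merely no shorter than $D$ but genuinely infeasible --- that is, so that partitioning the value jobs into triples of sum exactly $\beta$ is forced not by the time budget alone but by its interaction with the temperature-$\le\tmpthreshold$ constraint (an overfull block having to ``borrow'' an idle slot it cannot afford, hence being driven to under-cool somewhere and violate the threshold). Making this precise requires choosing the $\varepsilon_k$ and $D$ so that all the $\lceil\cdot\rceil$'s that govern the cool-down lengths are simultaneously tight, and tracking the evolution of $1-\tmp$ through a block and through a cool-down down to the last bit. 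This tight quantitative bookkeeping --- ensuring that ``this block sums to exactly $\beta$'' is equivalent to ``the temperature stays legal in this schedule'' --- is where the real work lies and where a careless choice of parameters would make the reduction collapse.
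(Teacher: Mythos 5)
There is a genuine gap, and you have in fact put your finger on it yourself: the entire ``only if'' direction is left as a plan rather than a proof, and it is precisely the part that does not follow from your construction. Worse, the natural instantiation of your separator idea provably fails. If the separator heats are $2-\varepsilon_k$ with $\varepsilon_k$ a power of two, then starting from any post-execution temperature $\tmp\in(\half,1]$ the number of idle slots forced before a separator is a fixed constant, independent of $\tmp$; likewise each value job $2-2^{1-a_i}$ forces exactly $a_i-1$ idle slots whenever the preceding temperature lies in $(\half,1]$. All of these forced-idle counts are additive and invariant under regrouping, so a ``mis-grouped'' schedule in which one block carries value jobs summing to $\beta+1$ and another to $\beta-1$ uses exactly the same total number of slots and never violates the threshold (over-cooling only helps); it therefore meets the common deadline $D$, and the reduction does not force a $3$-partition. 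Salvaging this by choosing non-dyadic $\varepsilon_k$ so that the cool-down length before a separator becomes sensitive to the exponentially small ``defect'' $1-\tmp$ accumulated so far is exactly the bookkeeping you defer, and it is far from routine, because once extra idle slots are inserted the defect no longer evolves by the clean multiplicative rule you quote.

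The paper's proof of Theorem~\ref{thm: np-complete 2} sidesteps this difficulty with a different structural trick that your sketch misses: it makes the number of jobs ($4n+1$, from {\NumThreeDM}) equal to the common deadline, so that a full-throughput schedule has \emph{no idle slots at all}. Gadget jobs of heat $2$ and $1.75$ are then hot enough that they must occupy exactly every fourth slot, carving out blocks of three slots, and the constant parts $8\alpha,4\alpha,2\alpha$ of the value-job heats force each block to contain one $A$-, one $B$- and one $C$-job in that order. The equality $a_i+b_i+c_i=\beta$ in every block is not obtained from a time budget but from the temperature constraints themselves: the threshold condition at each gadget step yields the prefix-sum inequalities $\sum_{i\le k}16^i p_i\le 0$, while assumption (A2) gives $\sum_i p_i=0$, and together these force every $p_i=0$. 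Your proposal contains neither the no-idle mechanism that rigidifies the block structure nor a substitute for this prefix-sum argument, so as it stands it does not establish the theorem.
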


\begin{proof}
The reduction is from {\NumThreeDM}.
In this problem, we are given 3 sets $A, B, C$ of $n$ non-negative 
integers each and a positive integer $\beta$.
A \emph{3-dimensional numerical matching} is a set of $n$ triples 
$(a, b, c) \in A  \times B \times C$ such that each number is matched 
exactly once (appears in exactly one triple) and
all triples $(a,b,c)$ in the set satisfy $a + b + c=\beta$. 
{\NumThreeDM} is known to be {\NP}-complete even when
the values of all numbers are bounded by a 
polynomial in $n$, it is referenced as problem [SP16] in \cite{GarJoh79}. 
(Clearly, this problem is quite similar to {\ThreePartition} that we
used in the previous proof.)
  
Without loss of generality, we can assume (A1) that every $x\in A\cup B\cup C$ 
satisfies $x\le \beta$ and (A2) that $\sum_{x\in A\cup B\cup C}x = \beta n$.

We now describe the reduction. Let be the constant $\alpha=1/25$ and the function $f : x\mapsto  \alpha(1 + x/8\beta)$.
The instance of $1|r_{i},h_{i},p_i=1|\sum U_{i}$
will have  $4n+1$ jobs, all with release time $0$ and deadline $4n+1$.  
These jobs will be of two types:

\begin{description}
\item{(i)} First we have $3n$ jobs that correspond to the instance of 
{\NumThreeDM}:
for every $a\in A$, there is a job of heat contribution $8f(a)$,
for every $b\in B$, there is a job of heat contribution $4f(b)$, and
for every $c\in C$, there is a job of heat contribution $2f(c)$.
We call these jobs, respectively, \emph{$A$-jobs}, \emph{$B$-jobs} 
and \emph{$C$-jobs}.

\item{(ii)}
Next, we have $n+1$ ``gadget'' jobs. The first of these jobs has heat 
contribution $2$, and the remaining ones $1.75$. We call these jobs,
respectively, \emph{$2$-} and \emph{$1.75$-jobs}.
\end{description}

We claim that the instance $A,B,C,\beta$ has a numerical
$3$-dimensional matching if
and only if the instance of $1|r_{i},h_{i},p_i=1|\sum U_{i}$ that
we constructed has a schedule with all jobs completed not later than
at time $4n+1$.

The idea of the proof is that the gadget jobs are so hot that they
need to be scheduled only every 4-th time unit, separating the time into 
$n$ blocks of 3 consecutive time slots each. Every remaining job has 
a heat contribution that consists of two parts: a constant part 
($8\alpha$, $4\alpha$ or $2\alpha$) and a tiny 
variable part that depends on the instance of the matching problem. 
This constant part is so large that in every block there is a single 
$A$-job, a single $B$-job and a single $C$-job,
and they must be scheduled in that order. This defines a 
partitioning of $A,B,C$ into triplets of the form $(a,b,c)\in A\times B\times C$. 
Since the gadget jobs are so hot, 
they force every triple $(a,b,c)$ to satisfy $a+b+c\le \beta$. 
We now make this argument formal.

$(\Rightarrow)$
Suppose there is a solution to the instance of {\NumThreeDM}. 
We construct a schedule where all jobs complete at or before time $4n+1$.
Schedule the $2$-job at time $0$, and all
other gadget jobs every $4$-th time slot. Now the remaining slots are 
grouped into blocks consisting of $3$ consecutive time slots each.
For $i=1,2,\ldots,n$,
associate the $i$-th triple $(a, b, c)$ from the matching with 
the $i$-th block, and the corresponding $A$-,$B$- and $C$-jobs are 
executed in this block in the order $A,B,C$ --- see Figure~\ref{fig: abc schedules}.

\begin{figure}[ht]
\begin{center}
\includegraphics[width=5in]{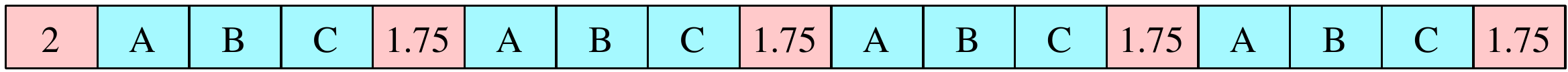}
\caption{The structure of the schedules in the proof.}
\label{fig: abc schedules}
\end{center}
\end{figure}

By construction, every job meets the deadline, so it remains to show that 
the temperature never exceeds $1$. The non-gadget jobs have all heat 
contribution smaller than $1$, by assumption (A1), so execution of a
non-gadget job cannot increase the temperature to above $1$, as long
as the temperature before was not greater than $1$.

Now we show by induction that right after the execution of a gadget job
the temperature is exactly $1$. This is clearly the case after execution 
of the first job, since its heat contribution is $2$. Now let $u$ be the 
time when a $1.75$-job is scheduled, and suppose that at time $u-3$ the 
temperature was $1$. Let $(a,b,c)$ be the triple associated with the block 
consisting of time slots between $u-3$ and $u$. Then, by $a+b+c=\beta$,  
at time $u$ the temperature is
\begin{eqnarray*}
  	\frac18 + \frac{8f(a)}8+\frac{4f(b)}4+\frac{2f(c)}2 
		\;=\; \frac18 + \alpha\left(3+\frac{a+b+c}{8\beta} \right)
		\;=\; \frac18+\alpha\left(\frac{24}{8}+\frac{1}{8}\right) 
		\;=\; \frac14.
\end{eqnarray*}
This shows that at time $u+1$, after scheduling a $1.75$-job, the temperature 
is again $(1.75+1/4)/2 = 1$. We conclude that the schedule is feasible.

$(\Leftarrow)$
Now we show the remaining implication.
Suppose the instance of  $1|r_{i},h_{i},p_i=1|\sum U_{i}$ constructed
above has a schedule where all jobs meet the deadline $4n+1$. We show that 
there exists a matching of $A,B,C$.
 
We first show that this schedule must have the form from 
Figure~\ref{fig: abc schedules}. First, note that
since all $4n+1$ jobs have deadline $4n+1$, all jobs must be scheduled 
without any idle time between time $0$ and $4n+1$. This means that the gadget 
job of heat contribution $2$ must be scheduled at time $0$,  because that is 
the only moment of temperature $0$. Also note that every job has heat contribution 
at least $2f(0)=2\alpha$.
  
Now we claim that all $1.75$-jobs have to be scheduled every $4$-th time unit.
This holds because two units after scheduling a $1.75$-job, the temperature is 
at least
\[
	\frac{1.75}{8} + \frac{2\alpha}{4} + \frac{2\alpha}{2} > 1/4.
\]
Therefore two executions of $1.75$-jobs must be at least $4$ time units apart,
and this is only possible if they are scheduled exactly at times $4i$ for 
$i=1,\ldots,n$. 

We claim that after every execution of a gadget job, the temperature is at 
least $\tau=364/375$. Clearly this is the case after the execution of the $2$-job.
Now assume that at time $4i+1$, for some $i=0,\ldots,n-1$, the claim holds. 
Then at time $4i+5$, after the execution of the next $1.75$-job, the temperature 
is at least
\[
	\frac{\tau}{16} + \frac{2\alpha}{16} + \frac{2\alpha}{8} + \frac{2\alpha}{4} 
	+\frac{1.75}{2} = \tau.
\]
We show now that every block contains exactly one $A$-job, one $B$-job and 
one $C$-job, in that order. Towards contradiction, suppose that some 
$A$-job is scheduled at the 2nd position of a block, say at time $4i+2$ for some $i\in\{0,\ldots,n-1\}$. Its heat contribution is at least $8f(0)$.  
Therefore the temperature at time $4i+4$ would be at least
\[
  	\frac{\tau}{8}+\frac{2\alpha}{8} + \frac{8\alpha}{4} +\frac{2\alpha}{2} 
		> 1/4,
\]
contradicting that a $1.75$-job is scheduled at that time:
A similar argument shows that $A$-jobs cannot be scheduled at position $3$
in a block, and therefore the $1$st position of a block is always occupied 
by an $A$-job. 

By an analogous reasoning, we show that a $B$-job cannot be scheduled at the 
$3$rd position of some block. It it were scheduled there, 
the temperature at the end of block would be at least
 \[
  	\frac{\tau}{8}+\frac{8\alpha}{8} + \frac{2\alpha}{4} +\frac{4\alpha}{2} 
> 1/4,
  \]
again contradicting that a $1.75$-job is scheduled at the end of the block.

We showed that every block contains jobs that correspond to 
some triple $(a,b,c)\in A\times B\times C$. It remains to show that each
such triple satisfies $a+b+c=\beta$. Let $(a_{i},b_{i},c_{i})$ be the triple
corresponding to the $i$-th block, for $1\le i\le n$. 

Define $t_{0}=1$ and 
\begin{eqnarray*}
	t_{i} &=& \frac{1}{16}\cdot 8f(a_i) + \frac{1}{8}\cdot 4f(b_i)
					+ \frac{1}{4}\cdot 2f(c_i) + \frac{1}{2} \cdot 1.75
						\\
		  &=& \frac{1}{400} [ 374 + (a_i + b_i + c_i)/\beta ].
\end{eqnarray*}
Thus $t_i$ represents the contribution of the $i$th block and a following
gadget job to the temperature right after this gadget job. This implies that,
for $1\le k\le n$, the temperature at time $4k+1$ is exactly
$\sum_{i=0}^{k} (1/16)^{k-i }t_{i}$. By Assumption (A2),
$\sum_{i=1}^n (a_{i}+b_{i}+c_{i}) = n\beta$, and thus
$\sum_{i=1}^n t_i = \frac{15}{16}n$.

Define $p_{i}=t_{i}-15/16$ for $i=1,2,...,n$. From the previous
paragraph,
\begin{equation} 										\label{eq:pi0}
    \sum_{i=1}^{n} p_{i} \;=\; 0.
\end{equation}
As mentioned earlier, $\sum_{i=0}^{k} (1/16)^{k-i }t_{i}$ is 
the temperature at time $4k+1$, so we have
$\sum_{i=0}^{k} (1/16)^{k-i }t_{i} \le 1$. Therefore,
for all $1\le k \le n$ we get
\begin{eqnarray*}
 16^{-k} \sum_{i=1}^{k} 16^{i} p_{i} 
		&=& 16^{-k} \sum_{i=1}^{k} 16^i (t_i -15/16)
\\
 		&=& \sum_{i=0}^{k} (1/16)^{k-i} t_i 
				- (1/16)^k -15 \sum_{i=1}^{k} (1/16)^{k-i+1}	
\\
		&\le& 1 - (1/16)^k -15 (1  - (1/16)^k ) / 15 
		\;=\; 0.
\end{eqnarray*}
We conclude that for $k = 1,2,...,n$ we have
\begin{eqnarray}
	\sum_{i=1}^{k} 16^{i} p_{i} &\le& 0.
				\label{eq:pik}
\end{eqnarray}
To complete the proof, it remains to show that 
$p_i = 0$ for all $i$, for this will
imply that $a_{i}+b_{i}+c_{i}=\beta$, which in turn implies that there is
a matching. We prove this claim by contradiction. Suppose 
that not all $p_{i}$'s are zero. Let $\ell$ be the smallest index 
such that  $p_{\ell}>0$ and
\begin{equation}										\label{eq:ell0}
       p_{1}+\ldots+p_{\ell}\ge 0.
\end{equation}
Clearly, $\ell\ge 2$.
By the minimality of $\ell$, for every $2\le k\le\ell-1$ we have
\[
	p_{1}+\ldots+p_{k-1} \le 0
	 	\text{ \:\:and\:\: } p_{k}+\ldots+p_{\ell} \ge 0.
\]
There are $\sigma_i > 0$, $i=1,...,\ell$ ,such that
$\sum_{i=1}^j \sigma_i = 16^j$. Then
\begin{eqnarray*}
\sum_{j=1}^\ell 16^j p_j \;=\; \sum_{j=1}^\ell \sum_{i=1}^j \sigma_i p_j
		\;=\; \sum_{i=1}^{\ell-1} \sigma_i \sum_{j=i}^\ell p_i + \sigma_\ell p_\ell
			\; >\; 0,
\end{eqnarray*}
because all terms are non-negative and $p_\ell > 0$. This 
contradicts (\ref{eq:pik}). 

It remains to show that the above
instance of $1|h_{i},p_i=1|\sum U_{i}$ can be produced in polynomial 
time from the instance of {\NumThreeDM}. Indeed, every number $x\in A\cup B\cup C$ 
is mapped to some fraction, where both the denominator and numerator are linear 
in $x$ and $\beta$. Therefore if we represent fractions by writing the denominator 
and numerator, and not as some rounded decimal expansion,  
the reduction will take polynomial time, and the proof is complete.
\end{proof}


Theorem~\ref{thm: np-complete 2} implies that other variants of temperature-aware
scheduling are {\NP}-hard as well. Consider for example the
problem $1|h_{i},p_i=1|C_{\max}$, where the objective is to minimize
the \emph{makespan}, that is, the maximum completion time.
In the decision version of this problem we ask if all jobs can
be completed by some given deadline $C$ -- which is exactly
what we proved above to be {\NP}-hard.
It also gives us {\NP}-hardness of
$1|h_{i},p_i=1|\sum C_j$. To prove this, we can use the decision version
of this problem where we ask if there is a schedule for which the
total completion time is at most $n(n-1)/2$ (where $n$ is the number
of jobs).

 
\section{An Online Competitive Algorithm}
\label{sec: An Online Competitive Algorithm}

In this section we show that there is a $2$-competitive algorithm
for  $1|\textrm{online-}r_{i},h_{i},p_i=1|\sum U_{i}$.
We will show, in fact, that a large class of deterministic algorithms is
$2$-competitive. 

Given a schedule, we will say that a job $j$ is \emph{pending} at
time $u$ if $j$ is released, not expired (that is, $r_j\le u< d_j$)
and $j$ has not been scheduled before $u$.
If the temperature at time $u$ is $\tmp_u$ and $j$ is pending, then
we call $j$ \emph{admissible} if $\tmp_u+h_j \le 2$, that is,
$j$ is not too hot to be executed.

We say that a job $j$ \emph{dominates} a job $k$ if $j$ is both not
hotter and has the same or smaller deadline than $k$, 
that is $h_j\le h_k$ and $d_j\le d_k$. 
If at least one of these inequalities is strict, then we say
that $j$ \emph{strictly dominates} $k$.
An online algorithm is called \emph{reasonable} if  at each step 
(i) it schedules a job whenever one is admissible (the 
\emph{non-waiting property}), and, if there is one,
(ii) it schedules an admissible job that is not strictly dominated by 
another pending 
job. The class of reasonable algorithms contains, for example, the 
following two natural algorithms:
\begin{description}
	\item{{\CoolestFirst}:} Always schedule a coolest admissible
	 	job (if there is any), breaking ties in favor of jobs
		with earlier deadlines.
	\item{{\EarliestDeadlineFirst}:} Always schedule an 
	    admissible job (if there is one) with the earliest deadline,
		breaking ties in favor of cooler jobs.
\end{description}
If two jobs have the same deadlines and heat contributions, both
algorithms give preference to one of them arbitrarily.


\begin{theorem}\label{thm: reasonable 2-competitive}
Any reasonable algorithm for $1|\textrm{online-}r_{i},h_{i},p_i=1|\sum U_{i}$ 
is $2$-competitive.
\end{theorem}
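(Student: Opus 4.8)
The plan is to fix a reasonable online algorithm $\algA$, run it on an arbitrary instance, and compare its schedule to an optimal schedule $\opt$ via an injective charging argument: every job completed by $\opt$ is charged to a job completed by $\algA$, with at most two $\opt$-jobs charged to any single $\algA$-job. This immediately gives $|\opt| \le 2\,|\algA|$. The natural charging is: a job $j$ completed by both $\opt$ and $\algA$ is charged to itself; a job $j$ completed by $\opt$ but not by $\algA$ must be charged to some job that $\algA$ ran ``in its place.'' The first thing I would establish is the non-waiting consequence: since $\algA$ is reasonable, whenever $\opt$ runs a job at time $u$, either $\algA$ also runs some job at time $u$, or $\algA$ has no admissible job at time $u$ at all — but the job $\opt$ runs at $u$ is pending for $\algA$ at $u$ only if $\algA$ hasn't already run it; so I need to be careful about the temperature, since a job admissible for $\opt$ at time $u$ need not be admissible for $\algA$ at time $u$ (the two schedules have different temperatures).

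The key structural step is therefore a temperature-coupling lemma: I would argue by induction on $u$ that the temperature of $\algA$'s schedule is never ``much worse'' than what is needed — more precisely, I expect the right invariant is something like: at every time $u$, either $\algA$ has completed at least as many jobs as $\opt$ has by time $u$, or $\algA$'s temperature at $u$ is at most $\opt$'s temperature at $u$ (so every job admissible for $\opt$ is admissible for $\algA$). The intuition is that if $\algA$ ever ``falls behind'' in count, it is because it has been running jobs non-stop (non-waiting) while being at least as aggressive as a coolest-first choice would be, and running jobs keeps the temperature in $[0,2]$ and in fact tends to pull it toward the heat values being scheduled; meanwhile the reasonableness condition (ii) guarantees $\algA$ never runs a job that is strictly dominated, so it never ``wastes'' a slot on a job that is both hotter and later than an available alternative. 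Getting the exact form of this invariant right is the main obstacle: I need it strong enough that an $\opt$-job pending-and-admissible for $\opt$ at $u$ is pending-and-admissible for $\algA$ at $u$ (unless $\algA$ already ran it or is already ahead), yet true enough to survive the induction through both idle steps and busy steps on both sides.

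Granting the invariant, I would finish as follows. Define the charge of an $\opt$-job $j$: if $\algA$ also completes $j$, charge $j$ to itself. Otherwise, let $u$ be the slot in which $\opt$ runs $j$. By the invariant (and the fact that $\algA$ did not complete $j$, so $\algA$ is not behind in a way that $j$ witnesses — here I use that $j$ is pending for $\algA$ at $u$ since $\algA$ never scheduled it, and $j$ is admissible for $\algA$ at $u$ by the temperature half of the invariant, unless $\algA$ is already ahead), $\algA$ has some admissible pending job at time $u$ and hence, by the non-waiting property, runs some job $k = k(u)$ at time $u$; charge $j$ to $k$. It remains to bound the number of $\opt$-jobs charged to a fixed $\algA$-job $k$. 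The self-charge contributes at most one. For the ``displacement'' charges: if $k$ receives a displacement charge from $j$ via slot $u$, then $\algA$ ran $k$ at $u$, so $u$ is determined by $k$ — thus $k$ receives at most one displacement charge. Hence each $\algA$-job gets at most two charges total (one self, one displacement), and $|\opt|\le 2|\algA|$, proving $2$-competitiveness. I would close by remarking that condition (ii) of reasonableness is what makes the displacement charge ``legitimate'' — without it $\algA$ could run a strictly-dominated hot job and permanently lock itself out — and point to the matching lower bound (proved elsewhere in the paper) to show the factor $2$ is tight.
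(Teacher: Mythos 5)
Your charging shell (a self-charge when $\algA$ also completes the job, plus a charge to whatever $\algA$ runs in the same slot otherwise) covers only the easy half of the argument; the whole difficulty sits in the case you delegate to your unproven ``temperature-coupling'' invariant, namely a slot $u$ where the adversary runs a job $j$ that $\algA$ never runs while $\algA$ is idle and strictly hotter. Worse, the invariant you propose (``at every time $u$, either $\algA$ has completed at least as many jobs as the optimum, or $\algA$ is no hotter'') is simply false for reasonable algorithms. Take $\EarliestDeadlineFirst$ (ties broken toward cooler jobs) on the instance, written as $j\to(r_j,d_j,h_j)$ with threshold $\tmpthreshold=1$: $A\to(0,3,1.2)$, $B\to(1,2,1.6)$, $H\to(3,4,1.6)$, $C\to(3,10,0.2)$, $D\to(4,5,1.3)$. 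The algorithm runs $A$ at time $0$ ($\tmp_1=0.6$), must idle at times $1$ and $2$, runs $H$ at time $3$ ($\tmp_4=0.875$; note $H$ is \emph{not} strictly dominated by $C$, since its deadline is earlier, so reasonableness does not forbid this), and runs $C$ at time $4$, finishing $3$ jobs. The unique optimal schedule idles at $0$ and runs $B,A,C,D$ at times $1,2,3,4$ ($\tmp'_2=0.8$, $\tmp'_3=1$, $\tmp'_4=0.6$), finishing $4$ jobs. At $u=4$ the algorithm has completed $2$ jobs against the optimum's $3$, yet $\tmp_4=0.875>\tmp'_4=0.6$: both branches of your invariant fail. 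Moreover, even where your ``count'' branch does hold it produces no charge: the optimum runs $B$ at slot $1$, where $\algA$ is idle, hotter, and never runs $B$ at any later time, so $B$ gets neither a self-charge nor a displacement charge; saying ``$\algA$ is ahead at time $1$'' cannot be fed into the final inequality $|\mbox{OPT}|\le 2|\algA|$, because that lead evaporates by the end (here the optimum finishes $4$ versus $3$). So the proof as proposed has a genuine gap at its central step.

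For comparison, the paper attacks exactly this missing case directly, with no global invariant. It classifies each adversary slot $v$ at which $\algA$ is idle by the relative temperatures: if $\algA$ is hotter at $v$ but not at $v+1$ (type~2), the adversary's job is charged to the job $\algA$ ran at the last \emph{relative-heating} step before $v$ (a step where $\algA$ ran a strictly hotter job than the adversary), and distinct type-2 charges go to distinct such steps; if $\algA$ is not hotter at $v$, or still hotter at $v+1$ (which forces $h_j\le 1$ because $\tmp_{v+1}\le 1/2$), one shows $\tmp_v+h_j\le 2$, so $j$ could not have been pending for $\algA$ at $v$ and must already have been executed by $\algA$, and the charge is routed along a chain of strictly cooler jobs to a job $j^\ast$ in $\algA$'s schedule (type~3), these chains being pairwise disjoint. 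A final argument shows type-1, type-2 and type-3 charges cannot all land on the same job of $\algA$, giving the bound of $2$. In your write-up you would need to supply this (or equivalent) machinery for the idle-and-hotter slots; the count-or-temperature dichotomy cannot substitute for it.
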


\begin{proof}
Let $\algA$ be any reasonable algorithm.
We fix some instance, and we compare the schedules produced by $\algA$
and the adversary on this instance. The proof is based on a
charging scheme that maps jobs executed by the 
adversary to jobs executed by $\algA$ in such a way
that no job in $\algA$'s schedule gets more than two charges.

\begin{figure}[ht]
\begin{center}
\includegraphics[width=6in]{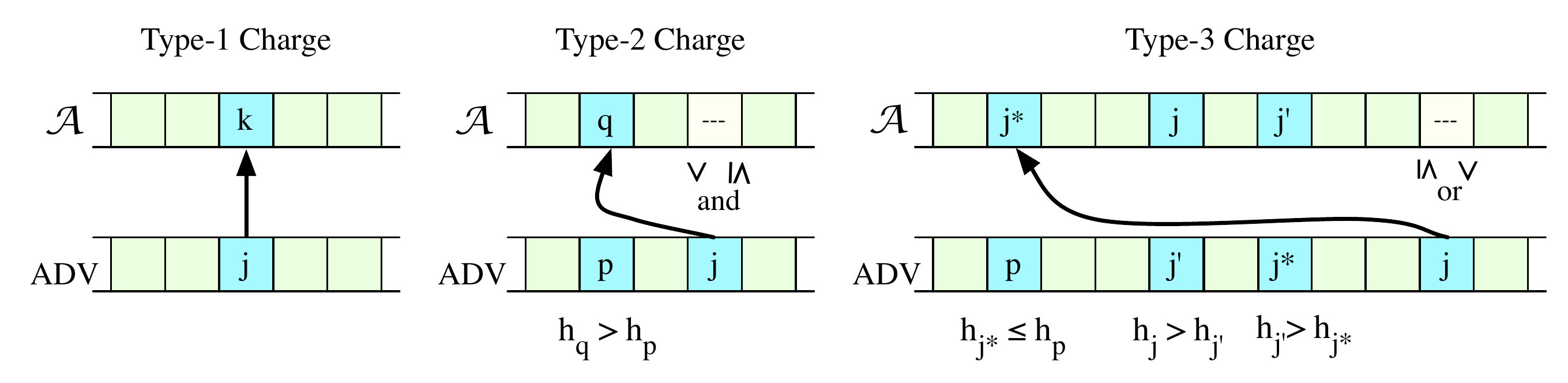}
  \caption{Four types of charges. The vertical inequality signs between
			the schedules show the relation between the temperatures.}
\label{fig: charging}
\end{center}
\end{figure}

We now describe this charging scheme. (See Figure~\ref{fig: charging} for
illustration.)
There will be three types of charges, depending on whether $\algA$
is busy or idle at a given time step and on the relative temperatures
of the schedules of $\algA$ and the adversary.
The temperature at time $u$ in the schedules of $\algA$
and the adversary will be denoted by $\tmp_u$ and $\tmp'_u$, respectively.

Suppose that at some time $u$, $\algA$ schedules a job $k$ while
the adversary schedules a job $j$, or that the adversary is idle
(we treat this case as if executing a job $j$ with $h_j = 0$.)
Then step $u$ will be called a \emph{relative-heating step} 
if $k$ is strictly hotter than $j$, that is $h_k > h_j$. Note
that if $\tau_v > \tau'_v$ for some time $v$, then a relative-heating
step must have occurred before time $v$.

Consider now a job $j$ scheduled by the adversary, say at time $v$.
The charge of $j$ is defined as follows:

\emph{Type~1 Charges:}
If $\algA$ schedules a job $k$ at time $v$, charge $j$ to $k$. 
Otherwise, we are idle, and then we have two more cases.

\emph{Type-2 Charges:}
Suppose that $\algA$ is hotter than the adversary 
at time $v$ but not at $v + 1$, that is $\tmp_u > \tmp'_u$ and
$\tmp_{u+1}\le \tmp'_{u+1}$.
In this case we charge $j$ to the job $q$ executed by $\algA$ in the last
relative-heating step before $v$. (As explained above, this step
exists.)

\emph{Type-3 Charges:}
Suppose now that either $\algA$ is not hotter than the adversary at $v$
or $\algA$ is hotter than the adversary at $v + 1$. In other words,
$\tmp_v\le \tmp'_v$ or $\tmp_{v+1} >\tmp'_{v+1}$.
(Note that in the latter case we must also have 
$\tmp_v > \tmp'_v$ as well, since the algorithm is idle.)

We claim that $\tmp_v + h_j \le 2$, which means that neither
$j$ or any job $\ell$ with $h_\ell \le h_j$ can be pending
at $v$. To justify this, we consider the two sub-cases of the
condition for type-3 charges.
If $\tmp_v \le \tmp'_v$, the claim is trivial, since
then $\tmp_v +h_j \le \tmp'_v + h_j \le 2$, because the 
adversary executes $j$ at time $v$.
So assume now that $\tmp_{v+1} >\tmp'_{v+1}$.
Since $\algA$ is idle, we have $\tmp_{v+1}\le 1/2$. Therefore 
$h_j = 2\tmp'_{v+1} -\tmp'_v 
    \le 2\tmp'_{v+1} 
    \le 1$, 
and the claim follows because $\tmp_v\le 1$ as well.

From the above claim, $j$ was scheduled by $\algA$ at some time $u < v$.
To find a job that we can charge $j$ to, we construct a chain of jobs
$j,j',j'',\ldots,j^\ast$ with strictly decreasing heat contributions.
Further, all jobs in this chain except $j^\ast$ will be executed
by $\algA$ at relative-heating steps. This chain will be determined 
uniquely by $j$, and we will charge $j$ to $j^\ast$.
If, at time $u$, the adversary is idle or schedules an equal or hotter job, 
then $j^\ast = j$, that is, we charge $j$ to itself (its ``copy" in 
$\algA$'s schedule). Otherwise, if the adversary schedules $j'$ at
time $u$ then $j'$ is strictly cooler than $j$, that is $h_{j'} < h_j$. 
Now we claim that the algorithm schedules $j'$ at some time before $v$. 
Indeed, if $j'$ is scheduled before $u$, we are done.
Otherwise, $j'$ is pending at $u$, and, since 
the algorithm never schedules a dominated job, we must have 
$d_{j'}\ge d_j \ge v+1$. By our earlier observation and
by $h_{j'} < h_j$, if $\algA$ did not schedule
$j'$ before $v$, then $j'$ would have been admissible at $v$, 
contradicting the fact $\algA$ is idle at $v$.
So now the chain is $j,j'$.
Let $u'$ be the time when $\algA$ schedules $j'$.
If the adversary is idle at time $u'$ or if $j'$ is not hotter than the
job executed by the adversary at time $u'$, we take $j^\ast = j'$.
Otherwise, we take $j''$ to be the
job executed by the adversary at time $u'$, and so on. 
This process must end at some point, since we deal
with strictly cooler and cooler jobs. So the job $j^\ast$
is well-defined.

\medskip

This completes the description of the charging scheme.
Now we show that any job scheduled by $\algA$ will get at most two
charges. Obviously, each job in $\algA$'s schedule gets at most
one type-1 charge. In-between
any two time steps that satisfy the condition of the type-2 charge
there must be a relative-heating step, so the type-2 charges are assigned
to distinct relative-heating steps. As for type-3
charges, every chain defining a type-3 charge is uniquely defined
by the first considered job, and these chains are disjoint.
Therefore type-3 charges are assigned to distinct jobs.

Now let $k$ be a job scheduled by $\algA$ at some time $v$. By the previous 
paragraph, $k$ can get at most one charge of each type. We claim that
$k$ cannot get charges of each type $1$, $2$ and $3$.
Indeed, if $k$ receives a type-1 charge, then the adversary is not idle at time $v$, 
and schedules some job $\ell$. 
If $k$ also receives a type-2 charge, then $v$ must be a relative-heating step,
that is $h_k > h_\ell$.
But to receive a type-3 charge, $k$ would be the last
job $j^\ast$ in a chain of some job $j$, and since the chain
was not extended further, we must have $h_k\le h_\ell$.
So type-1, type-2 and type-3 charges cannot coincide.

Summarizing the argument, we have that
every job scheduled by the adversary is charged to some 
job scheduled by $\algA$, and every job scheduled by $\algA$
receives no more than $2$ charges.
Therefore the competitive ratio of $\algA$ is not more than $2$.
\end{proof}


\section{A Lower Bound on the Competitive Ratio}
\label{sec: A Lower Bound on the Competitive Ratio}

\begin{theorem}\label{thm: lower bound of 2}
Every deterministic online algorithm for 
$1|\textrm{online-}r_{i},h_{i},p_i=1|\sum U_{i}$ has competitive ratio at least $2$.
\end{theorem}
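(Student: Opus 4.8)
The plan is to exhibit, for any fixed deterministic online algorithm $\algA$, an adversarial instance on which $\algA$ completes only half as many jobs as the optimum. The natural mechanism is to force $\algA$ into a state where it has ``wasted'' a cool time slot on a hot job, so that its temperature is too high to accept a subsequent job that the adversary could have fit in. Concretely, I would release a small batch of jobs with a short common window, where the batch contains both a hot job and a cool job, engineered so that the optimum schedule runs the cool job (or idles) to stay cold and then squeezes in an extra job the adversary releases later, whereas any algorithm that ever runs the hot job first (or that the non-waiting property forces to run \emph{something}) ends up too hot.

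First I would pin down the temperature arithmetic: starting from temperature $0$, after running a job of heat $h$ the temperature is $h/2$; after an idle step it halves. The threshold is $1$ (so a job of heat $h$ is admissible at temperature $\tmp$ iff $\tmp + h \le 2$). I would pick two jobs, a ``hot'' one with heat close to $2$ and a ``cool'' one with heat close to $0$, both released at time $0$ with deadline, say, $1$ or $2$, so that $\algA$ must schedule one of them immediately by any sensible (indeed, non-waiting is not assumed here, but the adversary can always make idling strictly worse). The key branch: whichever job $\algA$ picks at time $0$, the adversary reveals the continuation that is bad for that choice. If $\algA$ runs the hot job, its temperature is now near $1$, and the adversary releases a fresh hot job with a tight window that $\algA$ cannot accept (temperature too high) but that the optimum — which ran the cool job or idled at time $0$ — can. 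If $\algA$ runs the cool job (the ``good'' move), the adversary instead releases nothing further, and one observes the optimum still matches $\algA$ on that round but we have gained nothing — so the real construction must iterate: chain many such gadgets back to back so that in \emph{every} round $\algA$ is forced to fall one job behind, and after $m$ rounds $\algA$ has completed $m$ jobs while the optimum has completed $2m$, giving ratio $2$ in the limit (or exactly $2$ if we also allow the adversary to charge a final ``free'' job). I expect the clean version uses the fact that reasonable/non-waiting behavior is \emph{not} required of the adversary's opponent but the adversary can still force every online algorithm into the losing branch by a two-pronged release schedule at each round.

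The main obstacle is getting the temperatures to chain cleanly across rounds: after the adversary's extra hot job is (not) scheduled, the temperatures of $\algA$ and of the optimum at the start of the next round must be controlled so that the same gadget can be replayed without the constants drifting. I would resolve this by choosing the hot jobs' heat contributions so that running one brings the temperature to \emph{exactly} a fixed value $T$ regardless of the incoming temperature within a controlled range (the same trick used in the {\NP}-hardness proof: a job of heat $2 - \varepsilon$ run at temperature near $1$ leaves temperature near $1$), and by inserting a fixed number of forced idle slots between gadgets to reset the cool-side temperature to essentially $0$. The remaining work is a routine induction showing that at the end of round $i$, $\algA$ has completed exactly $i$ jobs and the optimal schedule has completed $2i$, together with a feasibility check that the optimum's schedule never exceeds threshold $1$; letting the number of rounds tend to infinity yields competitive ratio at least $2$, matching Theorem~\ref{thm: reasonable 2-competitive}.
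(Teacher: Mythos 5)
Your overall strategy---an adversary that branches on the algorithm's first decision and uses a later hot, tight job that the online algorithm is too warm to accept---is in the same spirit as the paper's proof. But there is a genuine gap at exactly the point you flag yourself: in the branch where $\algA$ makes the ``good'' move (runs the cool job, or idles), you concede that the adversary ``has gained nothing'' and defer to iterating the gadget. Iteration cannot repair this: if in every round the algorithm has an option that breaks even, it simply plays that option each round and the chained construction proves no bound above $1$. A correct construction must punish \emph{both} branches, and your plan never says how the cautious branch is punished; your remark that ``the adversary can always make idling strictly worse'' is precisely the claim that requires a construction, not an assumption.

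The paper closes this hole with a different gadget: a single moderately hot job $1\to(0,3,1.2)$ with a three-slot window, and no cool companion job. If $\algA$ schedules it at time $0$ (reaching temperature $0.6$), the adversary releases a tight job $(1,2,1.6)$, which is inadmissible for $\algA$ since $0.6+1.6>2$, while the adversary idles at time $0$, runs the tight job at time $1$, and runs job $1$ at time $2$, ending at temperature exactly $1$. If $\algA$ does \emph{not} schedule job $1$ at time $0$, the adversary runs it itself at time $0$, idles at time $1$, and releases a tight job $(2,3,1.6)$: now either $\algA$ has already run job $1$ at time $1$, so its temperature at time $2$ is $0.6$ and the tight job is inadmissible, or it is still holding job $1$ and must choose at time $2$ between job $1$'s last feasible slot and the tight job. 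In both branches the adversary completes two jobs to $\algA$'s one. The flexible deadline of the first job is the ingredient your plan is missing---it is what makes waiting punishable---and with it no chaining, temperature-reset trick, or limit over rounds is needed (if one insists on an asymptotic ratio, the single gadget can simply be replayed after enough idle slots, since both temperatures decay geometrically toward $0$).
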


\begin{proof}
Fix some deterministic online algorithm $\algA$.
We (the adversary) release a job 
$1\to (0,3,1.2)$ (in notation $j\to (r_j,d_j,h_j)$). 
If $\algA$ schedules it at time $0$, we release at time $1$
a tight job $2\to(1,2,1.6)$ 
and schedule it followed by $1$. $\algA$'s schedule is too
hot at time $1$ to execute job $2$. If $\algA$ does not
schedule job $1$ at time $0$, then we schedule it at $0$ and release
at time $2$ (and
schedule) a tight job $3\to(2,3,1.6)$ at time $2$. In this case, $\algA$
cannot complete both jobs $1$ and $3$ without violating the thermal
threshold.
In both cases we schedule two jobs, while $\algA$ schedules only one,
completing the proof. (See Figure~\ref{fig:lower-2}.)
\end{proof}

\begin{figure}[ht]
\begin{center}
\includegraphics[width=3in]{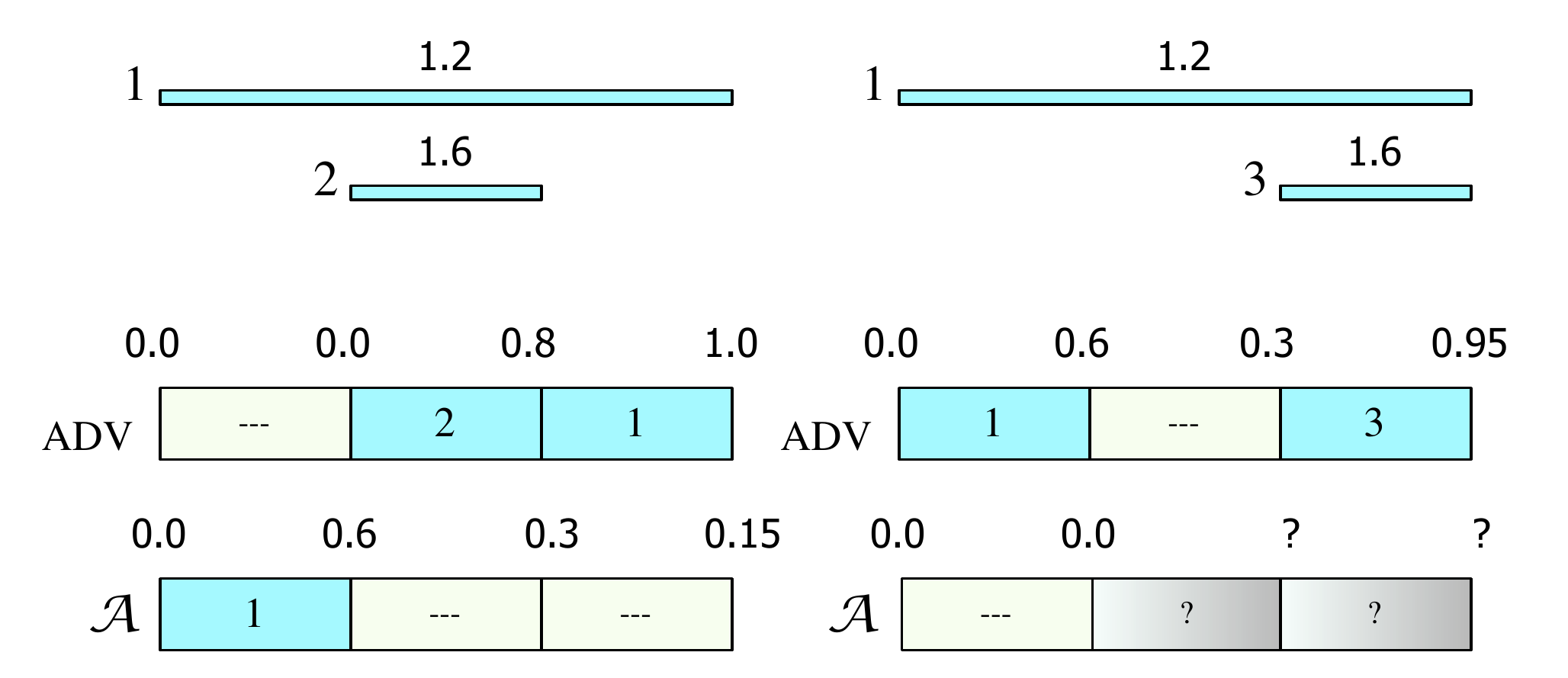}
  \caption{The lower bound for deterministic algorithms.}
\label{fig:lower-2}
\end{center}
\end{figure}


\section{Final Comments}
\label{sec: Final Comments}

Many open problems remain. Perhaps the most intriguing one is to
determine the randomized competitive ratio for the problem we studied.
The proof of Theorem~\ref{thm: lower bound of 2} can easily 
be adapted to prove the lower bound of $1.5$, but we have 
not been able to improve the upper bound of $2$; this is, in fact,
the main focus of our current work on this scheduling problem.
One approach, based on
Theorem~\ref{thm: reasonable 2-competitive}, would be to
randomly choose, at the beginning of computation,
two different reasonable algorithms $\algA_1$, $\algA_2$,
each with probability $1/2$, and then deterministically execute
the chosen $\algA_i$. So far, we have been able to show that
for many natural choices for $\algA_1$ and $\algA_2$
(say, {\CoolestFirst} and {\EarliestDeadlineFirst}), this
approach does not work. 

Extensions of the cooling model can be considered,
where the temperature after executing $j$ is $(\tmp+h_j)/R$, for some
$R > 1$. Even this formula, however, is only a discrete 
approximation for the true model (see, for example, \cite{YaXiCZ08}),
and it would be interesting to see if the ideas behind our 
$2$-competitive algorithm can be adapted to these more realistic cases.

In reality, thermal violations do not cause the system to idle, but
only to reduce the frequency. With frequency reduced to half, a unit
job will execute for two time slots. Several frequency levels may
be available.

We assumed that the heat contributions are known. This is
counter-intuitive, but not unrealistic, since the "jobs"
in our model are unit slices of longer jobs. Prediction methods
are available that can quite accurately predict the heat
contribution of each slice based on the heat contributions of
the previous slices. Nevertheless, it may be interesting
to study a model where exact heat contributions are not known.

Other types of jobs may be studied. For real-time jobs, one can 
consider the case when not all jobs are equally important,
which can be modeled by assigning weights to jobs and
maximizing the weighted throughput. For batch jobs, other
objective functions can be optimized, for example the flow time.

One more realistic scenario would be to represent the whole
processes as jobs, rather then their slices. This naturally
leads to scheduling problems with preemption and with
jobs of arbitrary processing times. When the thermal
threshold is reached, the execution of a job is slowed down
by a factor of $2$. Here, a scheduling algorithm may decide to
preempt a job when another one is released or, say, when the
processor gets too hot.

Finally, in multi-core systems one can explore the migrations
(say, moving jobs from hotter to cooler cores) to keep the
temperature under control. This leads to even more scheduling
problems that may be worth to study.


{
\bibliographystyle{plain}
\bibliography{temperature}

\begin{thebibliography}{10}

\bibitem{BeWeWK03}
F.~Bellosa, A.~Weissel, M.~Waitz, and S.~Kellner.
\newblock Event-driven energy accounting for dynamic thermal management.
\newblock In {\em Workshop on Compilers and Operating Systems for Low Power},
  2003.

\bibitem{CCFHWB07}
J.~Choi, C-Y. Cher, H.~Franke, H.~Hamann, A.~Weger, and P.~Bose.
\newblock Thermal-aware task scheduling at the system software level.
\newblock In {\em International Symposium on Low Power Electronics and
  Design,}, pages 213--218, 2007.

\bibitem{GarJoh79}
M.R. Garey and D.S. Johnson.
\newblock {\em Computers and Intractability: A Guide to the Theory of
  {NP}-Completeness}.
\newblock W.H.Freeman and Co., 1979.

\bibitem{GoPoVi04}
M.~Gomaa, M.~D. Powell, and T.~N. Vijaykumar.
\newblock Heat-and-run: leveraging smt and cmp to manage power density through
  the operating system.
\newblock {\em SIGPLAN Not.}, 39(11):260--270, 2004.

\bibitem{IraPru05}
S.~Irani and K.~R. Pruhs.
\newblock Algorithmic problems in power management.
\newblock {\em SIGACT News}, 36(2):63--76, 2005.

\bibitem{DonMar06}
M.~Martonosi J.~Donald.
\newblock Techniques for multicore thermal management: Classification and new
  exploration.
\newblock In {\em Proceedings of the International Symposium on Computer
  Architecture}, pages 78--88, 2006.

\bibitem{KuSPJ06}
A.~Kumar, L.~Shang, L-S. Peh, and N.~K. Jha.
\newblock {HybDTM}: a coordinated hardware-software approach for dynamic
  thermal management.
\newblock In {\em DAC '06: Proceedings of the 43rd Annual Conference on Design
  Automation}, pages 548--553, 2006.

\bibitem{KuChBB06}
E.~Kursun, C.-Y. Cher, A.~Buyuktosunoglu, and P.~Bose.
\newblock Investigating the effects of task scheduling on thermal behavior.
\newblock In {\em the 3rd Workshop on Temperature-Aware Computer Systems},
  2006.

\bibitem{MerBel06}
A.~Merkel and F.~Bellosa.
\newblock Balancing power consumption in multiprocessor systems.
\newblock {\em SIGOPS Oper. Syst. Rev.}, 40(4):403--414, 2006.

\bibitem{MoChRS05}
J.~Moore, J.~Chase, P.~Ranganathan, and R.~Sharma.
\newblock Making scheduling "cool": temperature-aware workload placement in
  data centers.
\newblock In {\em ATEC'05: Proceedings of the USENIX Annual Technical
  Conference 2005 on USENIX Annual Technical Conference}, pages 5--5, 2005.

\bibitem{YaXiCZ08}
J.~Yang, X.~Zhou, M.~Chrobak, and Y.~Zhang.
\newblock Dynamic thermal management through task scheduling.
\newblock In {\em IEEE International Symposium on Performance Analysis of
  Systems and Software}, 2008.
\newblock To appear.

\end{thebibliography}
}

\end{document}